\def\isarxiv{1}

\def\paperTitle{Lazy Kronecker Product}

\def\paperAuthor{
Zhao Song\thanks{\texttt{magic.linuxkde@gmail.com}.} 
}

\ifdefined\isarxiv
\documentclass[11pt]{article}
\usepackage[numbers]{natbib}
\else
\documentclass{article} 
\usepackage{iclr2026_conference,times}

\fi

\ifdefined\isarxiv

\usepackage{amsmath}
\usepackage{amsthm}
\usepackage{amssymb}
\usepackage{algorithm}
\usepackage{subfig}
\usepackage{algpseudocode}
\usepackage{graphicx}
\usepackage{grffile}
\usepackage{wrapfig,epsfig}
\usepackage{url}
\usepackage{xcolor}
\usepackage{epstopdf}

\usepackage{bbm}
\usepackage{dsfont}

\else 

\usepackage{hyperref}
\usepackage{url}

\fi

\ifdefined\isarxiv

\else

\usepackage{amsmath}
\usepackage{amsthm}
\usepackage{amssymb}
\usepackage{algorithm}
\usepackage{subfig}
\usepackage{algpseudocode}
\usepackage{graphicx}
\usepackage{grffile}
\usepackage{wrapfig,epsfig}
\usepackage{url}
\usepackage{xcolor}
\usepackage{epstopdf}

\usepackage{bbm}
\usepackage{dsfont}

\fi
 
\allowdisplaybreaks

\ifdefined\isarxiv

\usepackage{tikz}
\usepackage{hyperref}  
\hypersetup{colorlinks=true,citecolor=blue,linkcolor=blue} 
\usetikzlibrary{arrows}
\usepackage[margin=1in]{geometry}

\else
\fi
 
\graphicspath{{./figs/}}

\theoremstyle{plain}
\newtheorem{theorem}{Theorem}[section]

\newtheorem{definition}[theorem]{Definition}

\newtheorem{conjecture}[theorem]{Conjecture}

\ifdefined\isarxiv

\else
\renewcommand\cite\citep
\fi

\newcommand{\R}{\mathbb{R}}

\DeclareMathOperator{\diag}{diag}

\begin{document}

\ifdefined\isarxiv

\date{}
\title{\paperTitle}
\author{\paperAuthor}

\else

\title{\paperTitle}

\author{Antiquus S.~Hippocampus, Natalia Cerebro \& Amelie P. Amygdale \thanks{ Use footnote for providing further information
about author (webpage, alternative address)---\emph{not} for acknowledging
funding agencies.  Funding acknowledgements go at the end of the paper.} \\
Department of Computer Science\\
Cranberry-Lemon University\\
Pittsburgh, PA 15213, USA \\
\texttt{\{hippo,brain,jen\}@cs.cranberry-lemon.edu} \\
\And
Ji Q. Ren \& Yevgeny LeNet \\
Department of Computational Neuroscience \\
University of the Witwatersrand \\
Joburg, South Africa \\
\texttt{\{robot,net\}@wits.ac.za} \\
\AND
Coauthor \\
Affiliation \\
Address \\
\texttt{email}
}

%

\newcommand{\fix}{\marginpar{FIX}}
\newcommand{\new}{\marginpar{NEW}}

\maketitle

\fi

\ifdefined\isarxiv
  \maketitle
  \begin{abstract}
    In this paper, we show how to generalize the lazy update regime from dynamic matrix product [Cohen, Lee, Song STOC 2019, JACM 2021] \cite{cls21} to dynamic kronecker product. We provide an algorithm that uses $n^{\omega( \lceil k/2 \rceil, \lfloor k/2 \rfloor, a )-a}$ amortized update time and $ n^{\omega( \lceil(k-s)/2 \rceil, \lfloor (k-s)/2 \rfloor,a )}$ worst case query time for dynamic kronecker product problem. Unless tensor MV conjecture is false, there is no algorithm that can use both $n^{\omega( \lceil k/2 \rceil, \lfloor k/2 \rfloor, a )-a-\Omega(1)}$ amortized update time, and $ n^{\omega( \lceil(k-s)/2 \rceil, \lfloor (k-s)/2 \rfloor,a )-\Omega(1)}$ worst case query time.

  \end{abstract}


\else

\begin{abstract}

\end{abstract}

\fi



\section{Introduction}

We define a dynamic kronecker product problem such that, at each iteration, we receive a rank-$1$ update and a multi-dimensional query set. This can be viewed as a generalization of the matrix version presented in \cite{cls21}. Consider a $k$-th order tensor $A \in \R^{n \times \cdots \times n}$ and a positive integer $s$, an index set $\{ \ell_1, \cdots, \ell_s \} \subset [k]$ and a multi-set $i = \{ i_1, \cdots, i_s \}$ where each entry is chosen from $[n]$. Here $A_{\cdots, i_1, \cdots, i_s, \cdots}$ denotes a subtensor of $A$ that selects $i_1$ in $\ell_1$ direction, $i_2$ in $\ell_2$ direction and $i_s$ in $\ell_s$ direction.

\begin{definition}[Dynamic kronecker product]\label{def:dynamic_tensor}
Suppose we are given vectors $\{ u_{1}(1), \cdots, u_{1}(T) \} \subset \R^n$, $\{ u_{2}(1), \cdots, u_{2}(T) \} \subset \R^n, \cdots, $ $\{ u_{k}(1), \cdots, u_{k}(T) \} \subset \R^n$. Let $s$ denote a positive integer. During each iteration $t$, we will receive $(u_1(t) , u_2(t) ,\cdots, u_k(t) ,  \ell(t), i(t) )$.  Here $\ell= \{\ell_1, \cdots, \ell_s\}$ is an index set which is a subset of $[k]$. And $i = \{ i_1, i_2, \cdots, i_s \}$ is a multi-index set where each element is chosen from $[n]$. The goal is to output the answer
$(\sum_{i=1}^t \otimes_{j=1}^k u_{j}(i) )_{\cdots,i_1,\cdots,i_s, \cdots} $
for each iteration $t \in [T]$. We remark that the output is a sub-tensor.

\end{definition}

We state our main result as follows.
\begin{theorem}[Our result, algorithm]
Let $k \geq s \geq 1$ be positive integers. 
For any $a>0$, there is an algorithm that takes 
$
O( n^{ \omega( \lceil k/2 \rceil, \lfloor k/2 \rfloor,a ) - a} )$ amortized update time and $ O( n^{\omega( \lceil ( k - s ) / 2 \rceil, \lfloor ( k - s ) / 2 \rfloor, a )} )$ worst case query time.
\end{theorem}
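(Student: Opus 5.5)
We follow the lazy-update strategy of \cite{cls21}: buffer incoming rank-$1$ updates and flush them with one fast rectangular matrix multiplication once $n^a$ have accumulated. We maintain an explicit tensor $B = \sum_{i \le t_0} \otimes_{j=1}^k u_j(i)$, where $t_0$ is the time of the last flush, and a buffer of the at most $n^a$ most recent updates $\{(u_1(i),\dots,u_k(i))\}_{t_0 < i \le t}$. When the buffer reaches $n^a$ entries, we recompute $B$ and empty the buffer. The answer at any time is $B_{\cdots,i_1,\cdots,i_s,\cdots}$ plus the same subtensor of the buffered sum, and the first term is read off directly from $B$.

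\textbf{Flush cost.} Partition $[k]$ into a set $P$ of $\lceil k/2\rceil$ modes and a set $Q$ of $\lfloor k/2 \rfloor$ modes. For each buffered update $i$, form $x_i = \otimes_{j\in P} u_j(i) \in \R^{n^{\lceil k/2\rceil}}$ and $y_i = \otimes_{j \in Q} u_j(i) \in \R^{n^{\lfloor k/2\rfloor}}$. Stack them as $X \in \R^{n^{\lceil k/2\rceil}\times n^a}$ and $Y\in\R^{n^a \times n^{\lfloor k/2\rfloor}}$. The flattening of the buffered sum is then $XY$. Building $X$ and $Y$ costs $O(n^{a+\lceil k/2\rceil})$. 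The product costs $n^{\omega(\lceil k/2\rceil,a,\lfloor k/2\rfloor)}$, which equals $n^{\omega(\lceil k/2\rceil,\lfloor k/2\rfloor,a)}$ because $\omega(\cdot,\cdot,\cdot)$ is symmetric in its arguments. Adding the result into $B$ costs $n^k \le n^{\omega(\lceil k/2\rceil,\lfloor k/2\rfloor,a)}$, since the output alone has that size. The flush happens once every $n^a$ updates, so amortizing gives $O(n^{\omega(\lceil k/2\rceil,\lfloor k/2\rfloor,a)-a})$ per update. Appending an update to the buffer costs only $O(kn)$, which is dominated.

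\textbf{Query cost.} Fixing modes $\ell_1,\dots,\ell_s$ to indices $i_1,\dots,i_s$ sends each buffered rank-$1$ term to the scalar $c_i=\prod_{r=1}^s u_{\ell_r}(i)_{i_r}$ times $\otimes_{j\notin \ell} u_j(i)$. This is a rank-$1$ order-$(k-s)$ tensor, and computing $c_i$ takes $O(s)$ time. Split the remaining $k-s$ modes into groups of sizes $\lceil (k-s)/2\rceil$ and $\lfloor (k-s)/2\rfloor$. Form $X'$ with columns $c_i \otimes_{j\in P'} u_j(i)$ and $Y'$ with rows $\otimes_{j\in Q'} u_j(i)$, over at most $n^a$ buffered items. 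Their product $X'Y'$, computed in $n^{\omega(\lceil (k-s)/2\rceil,\lfloor (k-s)/2\rfloor,a)}$ time, is the flattened contribution of the buffer. Adding the slice of $B$ costs $n^{k-s}$, which is again dominated. This bound holds in the worst case at every query, because the buffer never exceeds $n^a$ entries. The edge case $s=k$ gives a scalar and costs $O(n^a)=n^{\omega(0,0,a)}$, which is consistent with the claimed bound.

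\textbf{Main obstacle.} The update bound as stated is amortized, and this is the delicate point. A single flush costs $n^{\omega(\cdots)}$, so a worst-case guarantee would need the standard de-amortization: spread the flush over the next $n^a$ updates while keeping two buffers. The theorem only asks for amortized update time, so we will skip this step. What remains is bookkeeping. The first check is that forming the Kronecker factors never exceeds the matrix-multiplication cost, which follows from $\omega(p,q,a)\ge \max(p+a,\,q+a,\,p+q)$. The second is that the slice of $B$ is read in $n^{k-s}$ time through direct indexing.
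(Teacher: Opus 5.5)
Your proposal is correct and is essentially the paper's proof. You buffer $K=n^a$ rank-$1$ updates and flush them into the stored tensor with one $n^{\lceil k/2\rceil}\times n^a \times n^{\lfloor k/2\rfloor}$ product of column-wise Kronecker (Khatri--Rao) factors; you answer a query by reading the slice of the stored tensor and then folding the fixed coordinates into one factor (your scalars $c_i$ are the paper's Hadamard product $w'$, applied via $\diag(w')$) before splitting the remaining $k-s$ modes in half. Your extra bookkeeping, namely the symmetry of $\omega$, the domination of the $n^{k-s}$ slice read and of the factor-formation costs, and the $s=k$ edge case, fills in details the paper leaves implicit.
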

\begin{proof}
Let $K:= n^a$. Without loss of generality, let us assume $K$ is a positive integer. We maintain a tensor $A$ and low-rank components $U_1,U_2, \cdots, U_k \in \R^{n \times K}$. After every $K$ iterations, we will update $A$ by computing $A \gets A +U_1 \otimes U_2 \otimes \cdots \otimes U_k$. In each iteration, the query time has two parts: the full part and the low-rank part. (1) Full part. In this part, we compute $A_{\cdots,i_1,\cdots,i_s,\cdots}$, which takes $O(n^{k-s})$ time.  (2) Low-rank part. In this part, we consider the time to compute $ (U_1 \otimes U_2 \otimes \cdots \otimes U_k)_{\cdots, i_1, \cdots, i_s, \cdots} = U_1 \otimes \cdots \otimes v_{l_1} \otimes \cdots \otimes v_{l_s} \otimes \cdots \otimes U_k$. Here we replace $s$ tensors $U_{\ell_1}, \cdots, U_{\ell_s}$ by $s$ vectors $v_{\ell_1}, \cdots, v_{\ell_s}$.  Let $w'$ denote a vector that is constructed by $\odot_{i=1}^s v_{\ell_i}$. The operation $\odot$ denotes the Hadamard product. Let $W'$ denote one of the un-replaced matrices. We construct $V_1$ by computing $\diag(w') W'$. Let $V_2, \cdots, V_{k-s}$ denote the remaining un-replaced matrices.  We compute $B = \oslash_{j=1}^{\lceil (k-s) / 2 \rceil} V_j $.\footnote{Given matrices $A_1 \in \R^{n \times d}, A_2 \in \R^{n \times d}, \cdots A_k \in \R^{n \times d}$, we define the $n^k \times d$ matrix $B := \oslash_{j=1}^k A_j$ as the $(p_1,\cdots,p_k) , q$ entry is $\prod_{j=1}^k (A_{j} )_{p_j,q}$.} Similarly, let us compute $C = \oslash_{j=\lceil (k-s)/2 \rceil+1 }^{ k-s } V_j $. 
Thus, the low-rank part takes $n^{\omega( \lceil(k-s)/2 \rceil, \lfloor (k-s)/2 \rfloor,a )}$ time in total. After every $K$ iterations, we need to update $A$. This update takes $n^{\omega( \lceil k/2 \rceil, \lfloor k/2 \rfloor, a )}$ time. To see this, we decompose the computation of $\otimes_{j=1}^{k} U_j $ for $U_1, \cdots, U_k \in \R^{n \times K}$ into two steps. First, we construct an intermediate matrix $B = \oslash_{j=1}^{\lceil k/2 \rceil} U_j$ and $C = \oslash_{j= \lceil k /2 \rceil+1 }^{k} U_j$.  Next, we compute $B \cdot C^\top$, which takes $n^{\omega( \lceil k/2 \rceil, \lfloor k/2 \rfloor, a )}$ time (the $n^{\lceil k/2 \rceil} \times n^{\lfloor k/2 \rfloor}$ matrix has a natural one-to-one mapping to a $k$-th order tensor $n \times n \times \cdots \times n$ tensor). Therefore, the time for updating $A$ is dominated by the second step, so the total time per update is $n^{\omega( \lceil k/2 \rceil, \lfloor k/2 \rfloor, a )}$. Since the update is performed once every $K$ iterations, the amortized update time is $n^{\omega( \lceil k/2 \rceil, \lfloor k/2 \rfloor, a )-a}$.
Thus, it takes $
O( n^{ \omega( \lceil k/2 \rceil, \lfloor k/2 \rfloor,a ) - a} )$ amortized update time and $ O( n^{\omega( \lceil ( k - s ) / 2 \rceil, \lfloor ( k - s ) / 2 \rfloor, a )} )$ worst case query time.
\end{proof}
We remark that for $k=2$ and $s=1$, the running time degenerates to update time of Lemma 5.4 and query time part of Lemma 4.5 in \cite{cls21}.

\cite{bns19} proposed hinted Mv conjecture, and \cite{s26} generalized to tensor.
\begin{definition}[A Tensor Version of Hinted Mv, \cite{s26}]\label{def:tensor_hinted_mv:otimes}
Working over a boolean semi-ring for a fixed parameter $\tau > 0$, we define the Tensor Hinted Mv problem as a three-phase process: Phase 1: Receive $k$ matrices $V_1, V_2, \dots, V_k$, each of size $n \times d$. Phase 2: Receive an order-$k$ diagonal tensor $P$ of dimensions $d \times d \times \dots \times d$, containing at most $n^\tau$ non-zero entries. Phase 3: Receive a set of target modes $\{ \ell_1, \dots, \ell_s \} \subseteq [k]$ and a corresponding multi-set of indices $\{ i_1, \dots, i_s \}$ drawn from $[n]$. The goal is to output the sub-tensor $[P(V_1, V_2, \dots, V_k)]_{\dots, i_1, \dots, i_s, \dots}$, where each index $i_t$ restricts the tensor along the corresponding $\ell_t$-th direction, for all $t\in [k]$. 
\end{definition}

\begin{conjecture}[\cite{s26}]\label{con:main}
For any algorithm solving the Tensor Hinted Mv problem (Definition~\ref{def:tensor_hinted_mv:otimes}) using polynomial preprocessing in Phase 1, at least one of the following lower bounds must hold: Phase 2 requires $\Omega(n^{\omega( \lceil k/2 \rceil , \lfloor k/2 \rfloor, \tau)-\delta} )$ time, Phase 3 requires $\Omega(n^{\omega( \lceil (k-s)/2 \rceil, \lfloor (k-s)/2 \rfloor , \tau)-\delta} )$ time 
for every $\delta > 0$.
\end{conjecture}

\begin{theorem}[Our result, hardness]
Unless tensor MV conjecture (Conjecture~\ref{con:main}) is false,  there is no algorithm that can use both $n^{\omega( \lceil k/2 \rceil, \lfloor k/2 \rfloor, a )-a-\delta}$ amortized update time, and $ n^{\omega( \lceil(k-s)/2 \rceil, \lfloor (k-s)/2 \rfloor,a )-\delta}$ worst query time for some constant $\delta >0$.
\end{theorem}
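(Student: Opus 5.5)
We argue by contradiction with a reduction from the Tensor Hinted Mv problem (Definition~\ref{def:tensor_hinted_mv:otimes}), in the same way \cite{cls21}-type lower bounds are derived from hinted Mv in \cite{bns19}. Suppose an algorithm $\mathcal{A}$ for the dynamic kronecker product problem (Definition~\ref{def:dynamic_tensor}) has amortized update time $n^{\omega( \lceil k/2 \rceil, \lfloor k/2 \rfloor, a )-a-\delta}$ and worst case query time $n^{\omega( \lceil(k-s)/2 \rceil, \lfloor (k-s)/2 \rfloor,a )-\delta}$. Set $\tau := a$, so $d$ is polynomial and the diagonal tensor $P$ has at most $K = n^a$ nonzero entries $P_{q,q,\dots,q}$ for $q \in S$, $|S| \le K$. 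The key identity is
\begin{align*}
P(V_1,\dots,V_k) = \sum_{q \in S} P_{q,\dots,q} \cdot \otimes_{j=1}^k (V_j)_{*,q},
\end{align*}
so $P(V_1,\dots,V_k)$ is a sum of at most $K$ rank-one kronecker products, exactly the object maintained in Definition~\ref{def:dynamic_tensor}. Over the boolean semiring we absorb the scalar $P_{q,\dots,q}\in\{0,1\}$ by simply skipping columns with zero weight (or multiplying it into $u_1$).

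The reduction proceeds in phases. In Phase 1 (polynomial preprocessing allowed) we store $V_1,\dots,V_k$ and initialize $\mathcal{A}$ on the zero tensor. To avoid paying for amortization that is not ``earned'' inside Phase 2, we also run $\mathcal{A}$ through a long warm-up: we feed it dummy zero updates (or updates that are later cancelled) for polynomially many rounds so that the amortized bound becomes a genuine bound on the cost of the next $K$ updates. Equivalently, we use the standard trick that an amortized bound over a sequence of length $T$ implies the total cost of any $K$ consecutive updates after a preprocessing prefix is at most $K\cdot n^{\omega(\cdot)-a-\delta}$ plus a term chargeable to preprocessing. In Phase 2, for each $q \in S$ we issue the update $(u_1,\dots,u_k) = ((V_1)_{*,q},\dots,(V_k)_{*,q})$; the total time is $K \cdot n^{\omega( \lceil k/2 \rceil, \lfloor k/2 \rfloor, a )-a-\delta} = n^{\omega( \lceil k/2 \rceil, \lfloor k/2 \rfloor, a )-\delta}$. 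In Phase 3, we issue a single query with index set $\ell$ and multi-index $i$, whose answer is $[P(V_1,\dots,V_k)]_{\dots,i_1,\dots,i_s,\dots}$ by the identity above, in time $n^{\omega( \lceil(k-s)/2 \rceil, \lfloor (k-s)/2 \rfloor,a )-\delta}$. Both phases beat the bounds of Conjecture~\ref{con:main} with $\tau=a$, which is a contradiction.

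The main obstacle is the amortized-versus-worst-case mismatch: Conjecture~\ref{con:main} bounds the actual time of Phase 2, while $\mathcal{A}$ only guarantees amortized update time. I would handle it as sketched: note that amortized guarantees are over the whole sequence starting from initialization, so I would place the Phase-2 updates after a prefix whose length $T_0$ is large polynomial, and accept a potential debt; if the algorithm might dump expensive work at the Phase-2 moment, instead interleave Phase-1 preprocessing copies (run $\mathcal{A}$ on many independent instances, or rewind from a saved state after Phase 1, which is allowed since Phase 1 is polynomial) so that the expected/ worst-case charge on the $K$ updates is at most the amortized bound. A secondary check is the semiring issue: the algorithm works over $\R$, but boolean products are recovered by computing over integers and thresholding at $>0$, with entries bounded by $K$, so no precision issue arises.
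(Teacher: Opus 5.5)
Your reduction is the paper's: set $\tau=a$, initialize $\mathcal{A}$ on the zero tensor in Phase 1, issue one update $((V_1)_{*,q},\ldots,(V_k)_{*,q})$ per nonzero diagonal entry of $P$ in Phase 2, and answer Phase 3 with a single query. Your remark that the boolean answer is recovered by computing over the integers and thresholding is a correct detail the paper leaves implicit.

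The genuine flaw is the warm-up you add to handle amortization. It does not secure the Phase 2 bound; it breaks it. An amortized bound $u$ only constrains prefix totals: the first $t$ operations cost $O(t u)$. So after $T_0$ cheap dummy updates, the algorithm may legitimately spend up to about $(T_0+K)u$ on the $K$ real updates. For instance, take zero dummy updates and modify any valid $\mathcal{A}$ so that, on its first nonzero update, it additionally wastes $c\cdot u$ time, where $c$ is the number of updates received before. It still has amortized time $O(u)$. But after your warm-up, the first real update alone costs at least $T_0 u \gg K u$.

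Your ``standard trick'' that the excess is ``chargeable to preprocessing'' is false. Any work performed after $P$ is revealed is Phase 2 time, and Phase 1 cannot depend on $P$. Rewinding to a saved post-warm-up state or running several copies does not help either, since each copy faces exactly the same situation.

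The remedy is simply to delete the warm-up, which is what the paper implicitly does. The $m\le n^{a}$ Phase 2 updates are the \emph{first} operations after initialization, so the amortized guarantee applied to this prefix gives total Phase 2 time $O(m\cdot n^{\omega(\lceil k/2\rceil,\lfloor k/2\rfloor,a)-a-\delta})=O(n^{\omega(\lceil k/2\rceil,\lfloor k/2\rfloor,a)-\delta})$. The worst-case query bound covers Phase 3. With that step removed, your proof coincides with the paper's.
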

\begin{proof}
We will prove this by reduction. Assume, for the sake of contradiction, that there exists a dynamic algorithm $\mathcal{A}$ for the dynamic tensor multiplication problem (Definition~\ref{def:dynamic_tensor}) that simultaneously achieves: amortized update time: $O(n^{\omega(\lceil k/2 \rceil, \lfloor k/2 \rfloor, a) - a - \delta})$, and worst case query time: $O(n^{\omega(\lceil (k-s)/2 \rceil, \lfloor (k-s)/2 \rfloor, a) - \delta})$. We will show how to use algorithm $\mathcal{A}$ to solve the Tensor Hinted Mv problem (Definition~\ref{def:tensor_hinted_mv:otimes}) faster than the lower bounds specified in Conjecture~\ref{con:main}, thus breaking the conjecture. To align the parameters, we set $a = \tau$. In Phase 1 of the Tensor Hinted Mv problem, we are given $k$ matrices $V_1, V_2, \dots, V_k$, each of size $n \times d$. We initialize an empty dynamic tensor structure using algorithm $\mathcal{A}$. This phase acts as our polynomial preprocessing. In Phase 2, we receive a diagonal tensor $P$ of size $d \times d \times \dots \times d$ with at most $n^\tau$ non-zero entries. Let the number of non-zero entries be $m \le n^\tau$. Each non-zero entry $P_{j,j,\dots,j}$ corresponds to a rank-1 update. For each of the $m$ non-zero entries, we issue an update to our dynamic algorithm $\mathcal{A}$. The update vectors are simply the $j$-th columns of our given matrices: $u_1(t) = (V_1)_{*,j}$,  $u_2(t) = (V_2)_{*,j}$, $ \dots$, $u_k(t) = (V_k)_{*,j}$.
Since there are at most $n^\tau$ such updates, and the amortized update time of $\mathcal{A}$ is $O(n^{\omega(\lceil k/2 \rceil, \lfloor k/2 \rfloor, \tau) - \tau - \delta})$, the total time required to process all updates is bounded by:
$m \times O(n^{\omega(\lceil k/2 \rceil, \lfloor k/2 \rfloor, \tau) - \tau - \delta}) \le n^\tau \times O(n^{\omega(\lceil k/2 \rceil, \lfloor k/2 \rfloor, \tau) - \tau - \delta}) = O(n^{\omega(\lceil k/2 \rceil, \lfloor k/2 \rfloor, \tau) - \delta})$. In Phase 3, we receive an index set $\{\ell_1, \dots, \ell_s\} \subseteq [k]$ and a multi-set of indices $\{i_1, \dots, i_s\}$. We need to evaluate the sub-tensor $[P(V_1, \dots, V_k)]_{\dots, i_1, \dots, i_s, \dots}$.
This is exactly the query operation defined for our dynamic tensor multiplication problem (Definition~\ref{def:dynamic_tensor}). We query algorithm $\mathcal{A}$ with these indices. By our assumption, the worst-case query time for algorithm $\mathcal{A}$ is: $O(n^{\omega(\lceil (k-s)/2 \rceil, \lfloor (k-s)/2 \rfloor, \tau) - \delta})$. We have constructed a solver for the Tensor Hinted Mv problem with the following runtimes: Phase 2 Time: $O(n^{\omega(\lceil k/2 \rceil, \lfloor k/2 \rfloor, \tau) - \delta})$. Phase 3 Time: $O(n^{\omega(\lceil (k-s)/2 \rceil, \lfloor (k-s)/2 \rfloor, \tau) - \delta})$. However, Conjecture~\ref{con:main} states that for \emph{any} algorithm, at least one of the following must be true for every $\delta > 0$:  Phase 2 requires $\Omega(n^{\omega(\lceil k/2 \rceil, \lfloor k/2 \rfloor, \tau) - \delta})$. Phase 3 requires $\Omega(n^{\omega(\lceil (k-s)/2 \rceil, \lfloor (k-s)/2 \rfloor, \tau) - \delta})$. Since our algorithm $\mathcal{A}$ strictly beats \emph{both} bounds simultaneously (by a factor of $n^\delta$), we have reached a contradiction. Therefore, such an algorithm $\mathcal{A}$ cannot exist unless the Tensor Hinted Mv conjecture is false.
\end{proof}
We remark that the above proof is similar to proof of Theorem 5.3 in \cite{bns19}.

\ifdefined\isarxiv
\bibliographystyle{alpha}
\bibliography{ref}
\else
\bibliographystyle{iclr2026_conference}
\bibliography{ref}
\fi

\end{document}